
\documentclass[conference]{IEEEtran}
\makeatletter
\def\ps@headings{%
\def\@oddhead{\mbox{}\scriptsize\rightmark \hfil \thepage}%
\def\@evenhead{\scriptsize\thepage \hfil \leftmark\mbox{}}%
\def\@oddfoot{}%
\def\@evenfoot{}}
\makeatother
\pagestyle{empty}


%


%

%
\usepackage{graphicx}
\usepackage{subfigure}
\usepackage[english]{babel}
\usepackage{url}            
\usepackage{booktabs}       
\usepackage{amsfonts}       
\usepackage{nicefrac}       
\usepackage{microtype}      
\usepackage{lipsum}
\usepackage{amsthm}
\usepackage{amssymb}
\usepackage{amsmath}
\usepackage{mathptmx}
\usepackage{hyperref}       
\usepackage{xcolor}
\usepackage{comment}
\usepackage{soul}

\usepackage{algorithm}
\usepackage[noend]{algpseudocode}
\makeatletter
\def\BState{\State\hskip-\ALG@thistlm}
\makeatother

\usepackage[english]{babel}

\DeclareMathOperator*{\argmin}{arg\,min}

\newcommand{\dd}{\mathrm{d}}

\newcommand{\ie}{\emph{i.e.}}

\newtheorem{theorem}{Theorem}
\newtheorem{lemma}{Lemma}

\newtheorem{example}{Example}
\newtheorem{definition}{Definition}
\newtheorem{remark}{Remark}

\usepackage{comment}

\begin{document}
%
\title{On the Role of Risk Perceptions in Cyber Insurance Contracts}
%
%
%

\author{Shutian~Liu
        and~Quanyan~Zhu
\thanks{The authors are with the Department of Electrical and Computer Engineering, Tandon School of Engineering,
        New York University, Brooklyn, NY, 11201 USA (e-mails: sl6803@nyu.edu; qz494@nyu.edu).}       
}

\pagestyle{empty}

\maketitle
\thispagestyle{empty}


\begin{abstract}
Risk perceptions are essential in cyber insurance contracts.
With the recent surge of information, human risk perceptions are exposed to the influences from both beneficial knowledge and fake news.
In this paper, we study the role of the risk perceptions of the insurer and the user in cyber insurance contracts.
We formulate the cyber insurance problem into a principal-agent problem where the insurer designs the contract containing a premium payment and a coverage plan.
The risk perceptions of the insurer and the user are captured by coherent risk measures.
Our framework extends the cyber insurance problem containing a risk-neutral insurer and a possibly risk-averse user, which is often considered in the literature.
The explicit characterizations of both the insurer's and the user's risk perceptions allow us to show that cyber insurance has the potential to incentivize the user to invest more on system protection.
This possibility to increase cyber security relies on the facts that the insurer is more risk-averse than the user (in a minimization setting) and that the  insurer's risk perception is more sensitive to the changes in the user's actions than the user himself.
We investigate the properties of feasible contracts in a case study on the insurance of a computer system against ransomware.

\end{abstract}


%

\section{Introduction}
\label{sec:intro}
%
%
%
%

Human risk perception plays an important role in cyber insurance contracts.
On the one hand, individuals who are risk-averse tend to overreact to severe potential losses that are not likely to happen. 
On the other hand, they are eager to seek additional resources to defend against cyber losses. 
The risk-sharing property of cyber insurance contracts that mitigates the cyber losses of users depends on the user's risk-aversion. The cyber insurance market does not exist when the users are risk-neutral \cite{marotta2017cyber}.
In an era of information explosion, people may either intend to adjust their risk attitudes according to expert advice or be manipulated by fake news to reform they risk preferences without awareness \cite{liu2022mitigating}. 
The instability of risk perception can have essential impacts on the insurance market and the resiliency of cyber systems. Therefore, there is a need to study how users with different risk perceptions behave when they face potential cyber losses and how the optimal insurance plan should change according to the risk attitudes.

Linear contracts involving a prepaid premium and a coverage plan are among the most considered contract models in the cyber insurance market. The premium is a money transfer from the user to the insurer for entering the contract and the coverage plan describes the proportion of cyber losses covered by the insurer.  
In a linear contract model, the insurer is often assumed to be risk-neutral and evaluate her loss using expectations; the user is set to exhibit risk-aversion.
The risk-aversion of an individual can be modeled by a nonlinear utility function \cite{tversky1992advances,stole2001lectures}.
The risk-aversion is captured by the fact that the utility function is assumed to be increasing and concave in return.

There is a recent trend on the study risk quantification adopting coherent risk measures (CRMs). The axiomatic definition of CRMs maintains the generality of the risk modeling and provides rich insights towards applications. The dual representation of a CRM shows its robustness to probabilistic uncertainty. Reformulation techniques \cite{rockafellar2000optimization,acerbi2002spectral,shapiro2013kusuoka} have also enabled convenient and tractable computation methods of risks modeled by various CRMs.

\begin{figure}[ht]
\centering
\vspace{-3.2mm}\includegraphics[width=0.46\textwidth]{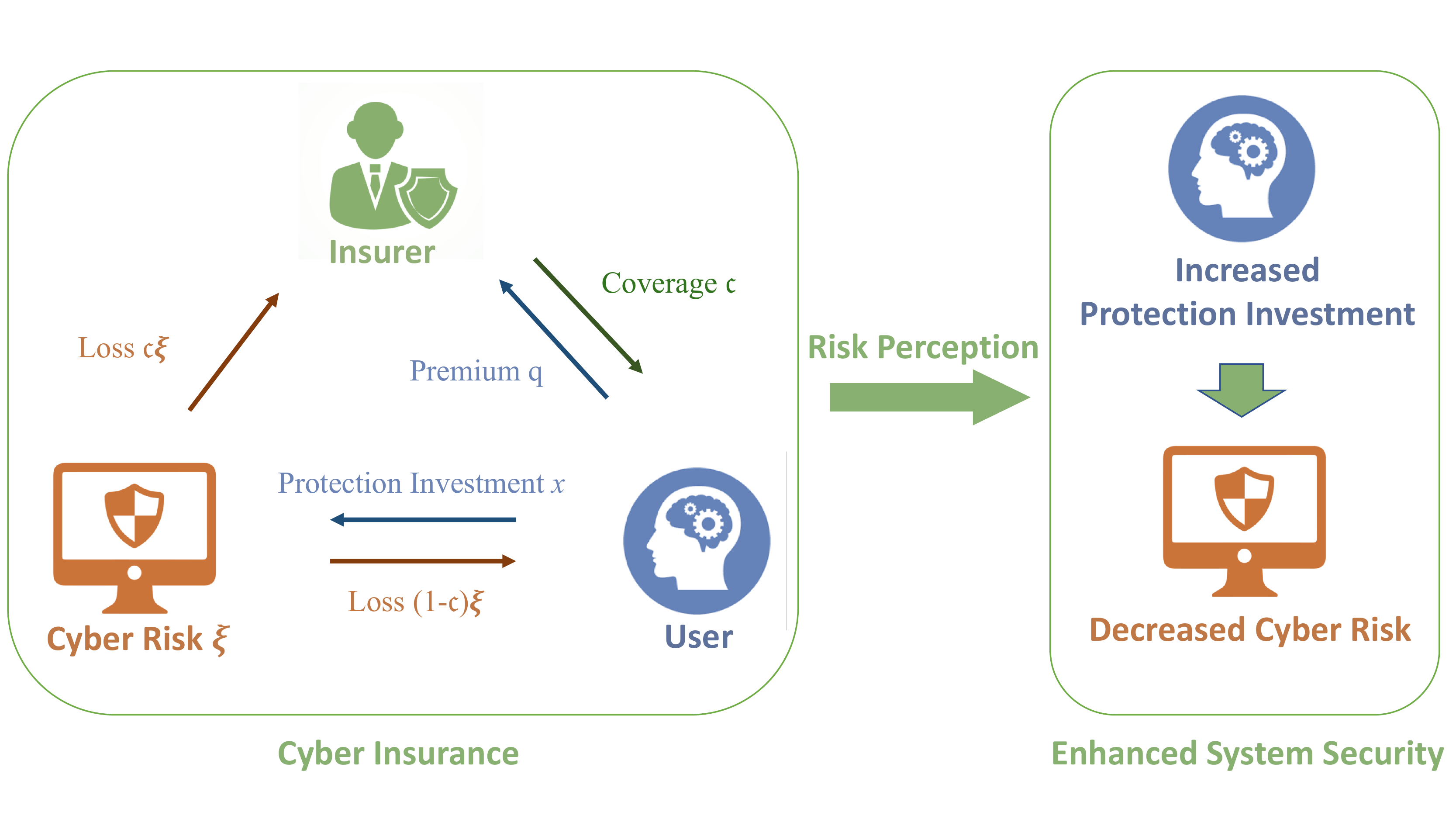}

\vspace{-2mm}\caption[Optional caption for list of figures]{Cyber insurance has the potential to enhance system security by incentivizing the user to invest more on system protection, if the risk perception of the insurer exhibits more risk-aversion and is more sensitive towards the distributional shifts of the cyber losses.  } 
\label{fig:framework}
\end{figure}

We study the role of human risk perceptions in cyber insurance using a holistic framework which incorporates the modern risk modeling approach and a linear principal-agent (P-A) model.
In particular, we use CRMs to describe the risk-aversion of the principal and the agent. 
The reason is two-fold.
First, CRMs allow us to investigate the probabilistic distortion to the random cyber losses caused by human risk-aversion.
Second, cyber risks are challenging to quantify due to the difficulty in transforming the cyber losses to monetary losses.
Therefore, the probabilistic robustness that CRMs possess leads to reliable and safe estimations of cyber risks. 
Our framework builds on the hidden-action linear contract problem to capture the information asymmetry between the insurer and the user. Specifically, the principal minimizes her loss function by designing the contract containing a premium and a coverage rate subject to the individual rationality (IR) constraint which guarantees beneficial participation and the incentive compatibility (IC) constraint which corresponds to the rationality of the agent. Due to the IC constraint, the contract problem appears in the form of a bilevel program. Though a full-information counterpart to this problem where the IC constraint is absent produces a lower optimal loss, the hidden-information does not suffer from the moral hazard issue \cite{holmstrom1979moral}.
In this work, we focus on the influence of risk measures on the insurance. Hence, we do not consider additional nonlinearities on top of the random cyber loss modeled by a random variable endowed with a parametric distribution. The validity of linear contracts follows from the monotonicity property of the solution to general contract problems \cite{holmstrom1979moral}.

Hidden-action contract problems are challenging because of the bilevel nature. However, leveraging the linearity of the contract and the first-order approach, we can simplify the problem and derive its optimality conditions.
The conditions allow us to characterize the coverage and the premium in terms of the derivative of the risk of the user with respect to his action. By choosing proper risk measures, practitioners can obtain optimal contracts which satisfy desired properties.

One of the most essential features of the principal-agent models lies in that the distribution of the random losses is parametric in the agent’s action. In our framework, how the risks perceived by the insurer and the user change according to the user' actions captures how sensitive the insurer and the user are towards the parameterization.
We show that when, compared to the user's risk perception, the insurer's risk perception exhibits more aversion to random cyber losses and is more sensitive to the parameterization,
cyber insurance can enhance system security by incentivizing the protection investment of the agent. 
These requirements suggest the following characteristics of the insurer.
First, the insurer, who bears the responsibility in evaluating the system risks, should be able to estimate the cyber losses more cautiously than the user.
Second, aiming to design an incentive contract, the insurer should possess a higher level of awareness of how the actions from the user influence the system risks stochastically than the user himself.
Our result enriches the literature by introducing the possibility that cyber insurance can incentivize the user's system protection investment and hence enhance the overall system security. This possibility is not observed in traditional cyber insurance problems where the risk perceptions are captured by nonlinear utility functions \cite{marotta2017cyber,khalili2018designing,bohme2010modeling}.

The paper is organized as follows. In Section \ref{sec:problem}, we first introduce the risk preference modeling, then we incorporate it into the cyber insurance contract design problem.
Section \ref{sec:analysis} contains the analysis of the game. We discuss the roles of risk perceptions in shaping the optimal contract and the relation between risk sensitivity and system security.
We use a case study to further investigate the insurance contracts in Section \ref{sec:case}.
Finally, Section \ref{sec:conclusion} concludes the paper.

\section{Problem Formulation}
\label{sec:problem}
In this section, we first introduce the definition of CRMs and their analytical properties. Then, we introduce the cyber insurance contract design problem with the risk preferences of the insurer and the user described by CRMs.

\subsection{Risk Preference Modeling}
\label{sec:problem:risk}
Consider the probability space $(\Xi, \mathcal{F})$ of cyber loss samples $\xi\in\Xi\subset \mathbb{R_+}$ endowed with the reference probability measure $P$.
Let $\mathcal{Z} :=\mathcal{L}_p(\Xi, \mathcal{F}, P)$ denotes the space of random losses $Z:\Xi\rightarrow \mathbb{R}$ with finite $p$-th order moment. 
The parameter $p$ lives in $[1,+\infty)$.
A risk measure $\rho$ is a function $\mathcal{Z}\rightarrow \mathbb{R}$ that assigns a deterministic value to a random loss. 
Classic approaches to risk modeling includes using the expected loss, the standard deviation of the loss, the value-at-risk, and etc. 
These risk metrics can come in handy in many real situations due to their simplicity and straightforwardness of interpretation. 
However, the classic risk metrics are lacking in the following two ways.
First, one risk metric cannot fully characterize the behavior of a random loss. A simple example would be that using the expectation to characterize the risk of a Gaussian random loss has $50\%$ chance to fail when the randomness is realized.
Second, human risk perceptions are different across individuals. According to \cite{tversky1992advances}, humans tend to distinguish between losses and gains and are likely to perceive the true probability of random events with biases.
Hence, risk metrics should characterize human behaviors beyond merely risk-neutrality.

In this paper, we will use CRMs to characterize the risk sensitivities that the insurer and the user exhibit. 
\begin{definition}[Coherent Risk Measures \cite{artzner1999coherent}]
\label{def:CRM}
A function $\rho:\mathcal{Z}\rightarrow\mathbb{R}$ is called a Coherent Risk Measure if for $Z, Z' \in \mathcal{Z}$ it satisfies
\\
(A1) Monotonicity: $\rho(Z)\geq \rho(Z')$ if $Z(\xi)\geq Z'(\xi)$ for almost everywhere $\xi\in \Xi$.\\
(A2) Convexity: 
    $\rho(tZ+(1-t)Z') \leq t\rho(Z)+(1-t)\rho(Z')$
for $t\in[0,1]$.\\
(A3) Translation equivariance: $\rho(Z+a)=\rho(Z)+a$ if $a\in\mathbb{R}$.\\
(A4) Positive homogeneity: $\rho(tZ)=t\rho(Z)$ if $t\geq 0$.
\end{definition}
One definition of risk-aversion \cite{uryasev2010var} is referred to the fact that the perceived risk is not smaller than the expectation of the random loss, \ie, $\rho[\cdot]\geq \mathbb{E[\cdot]}$.
A convex risk measure captures the risk-aversion of decision-makers in this sense.

A CRM captures the decision-maker's robustness consideration to probabilistic uncertainty due to the following dual representation \cite{artzner1999coherent,shapiro2021lectures}:
\begin{equation}
    \rho[Z(\xi)]=\sup_{\zeta \in \mathfrak{A}}
    \int_{\Xi}Z(\xi)\zeta(\xi)\dd P(\xi),
    \label{eq:dual representation of individual risk}
\end{equation}
where $\mathfrak{A}\subset \mathcal{Z}^*$ denotes the dual set associated with the risk measure $\rho_\theta$ and contains probability density functions with respect to the probability measure $P$. The set $\mathcal{Z}^*$ denotes the dual space of $\mathcal{Z}$ defined by $\mathcal{Z}^*:\mathcal{L}_q(\Xi, \mathcal{F}, P)$ with $\frac{1}{p}+\frac{1}{q}=1$.
The optimization problem (\ref{eq:dual representation of individual risk}) admits an optimal solution since the set $\mathfrak{A}$ is convex and compact when $p\in[1,+\infty)$ \cite{shapiro2021lectures}.

The following is an important property of a risk measure.
\begin{definition}
\label{def:law invariance}
(Law-invariance.)
A risk measure $\rho:\mathcal{Z}\rightarrow \mathbb{R}$ is law-invariant with respect to the reference probability measure $P$, if $\forall Z_1,Z_2\in \mathcal{Z}$ such that $P(Z_1 \leq t)=P(Z_2 \leq t)$ for all $t\in\mathbb{R}$, then $\rho(Z_1)=\rho(Z_2)$.
\end{definition}
A law-invariant risk measure maps random variables that admit the same cumulative distribution function to the same risk.
In other words, $\rho(\cdot)$ is law-invariant, the function $\rho(Z)$ only depends on the cumulative distribution function of $Z$.

The following are some examples of risk measures.
\begin{example}
(Average value-at-risk.) Let $F_Z(t)$ denotes the cumulative distribution function of a random variable $Z$. The average value-at-risk (AV@R) at $a\%$-level is defined as:
\begin{equation}
    \text{AV@R}_a(Z):=(1-a)^{-1}\int_{a}^{1}\text{V@R}_t(Z)\dd t, a\in[0,1),
    \label{eq:avar}
\end{equation}
where $\text{V@R}_b(Z):= \inf \{t: F_Z(t)\geq b\}$.
AV@R is a law-invariant CRM \cite{ruszczynski2006optimization}.
\end{example}

\begin{example}
(Absolute semideviation.) The absolute semideviation of a random variable is defined as $\rho_\theta[Z]=\mathbb{E}[Z]+\theta \mathbb{E}\{[z-\mathbb{E}(Z)]_+\}$ with $\theta\geq0$. 
The absolute semideviation is a law-invariant CRM if $\theta\in [0,1]$ \cite{shapiro2013kusuoka}.
\end{example}

\subsection{Cyber Insurance Contract Design Problem}
\label{sec:problem:insurance}
Our contract design problem involves one insurer and one user.
Let $x\in X$ denotes the protection investment of the user, where $X \subset \mathbb{R}_+$ is convex and compact.
The cost of investment is $m>0$.
The protection investment affects the distribution of the cyber loss $\xi$.
In particular, we assume that when the user takes action $x$, $\xi$ admits a parameterized cumulative distribution function $P(\xi,x)$ which is twice differentiable with respect to $x$ and absolutely continuous with respect to $P(\xi)$.
The probability density function of $P(\xi,x)$ with respect to $P(\xi)$ is denoted $p(\xi, x)$.
Furthermore, we make the following two assumptions.
Firstly, we assume the protection investment benefits the system in the sense that the investment provides a first-order stochastic dominance shift on $\Xi$, \ie, for all $x_1, x_2\in X$ such that $x_1\leq x_2$, the distribution functions satisfy $P(\xi,x_1) \leq P(\xi, x_2)$ for all $\xi\in \Xi$.
Intuitively, this stochastic dominance assumption means that when the protection investment is higher, the likelihood that severer cyber losses occur is lower.
Secondly, we assume that the density $p(\xi,x)$ is convex in $x$ for all $\xi\in\Xi$, \ie, $\frac{\partial^2}{\partial x^2}p(\xi,x)\geq 0$.
Since we are considering minimization problems, this assumption will lead to convex optimization problems as we will discuss later.
Furthermore, this convexity assumption indicates
that the user needs to successively increase his protection investment to obtain remarkable mitigation of cyber losses.
Hence, the user has to balance his protection investment between the increasing cost of system updates and the potential decreased likelihood of severe cyber losses.

When the user does not purchase the cyber insurance, he receives the full random cyber loss $Z=\xi$ distributed according to the parameterized distribution $P(\xi,x)$.
Note that one can consider a more general loss function, such as $Z=\phi(\xi)$ where $\phi(\cdot)$ is a nonlinear continuous function.
However, since our focus is on the role of risk measures, we choose $\phi$ as the identity function. 
Our results can be generalized with slightly more complicated notations.

The insurer designs the coverage plan $c\in [0,1]$ and the premium payment $q>0$ in the cyber insurance contract. 
When the user chooses to purchase the insurance, the user only receives $(1-c)\xi$ of the random cyber loss at a cost of paying the premium $q$ to the insurer.
The rest amount of the random cyber loss $c\xi$ is beard by the insurer.

We assume that the risk perceptions of the insurer and the user are captured by risk measures $\rho^i$ and $\rho^u$, respectively.
When the user purchases the insurance, his loss function is: $U(x):=\rho^u[(1-c)\xi+mx+q]$.
The loss function of the insurer is:
$J(c,p,x):=\rho^i[c\xi-q]$.
The cyber insurance contract design problem can formulated as the following Stackelberg game:
\begin{equation}
    \begin{aligned}
         \min_{c\in [0,1],q>0,x\in X} \  \ &\rho^i[c\xi-q] \\
         \text{s.t.} \  \ &\rho^u[(1-c)\xi+mx+q]\leq \Bar{U},\  \ \text{(IR)}, \\
         & x\in \argmin_{x'\in X} \rho^u[(1-c)\xi+mx'+q], \  \ \text{(IC)}.
         \label{eq:hidden action contract problem}
    \end{aligned}
\end{equation}
where $\Bar{U}$ is the optimal solution of the optimization problem:
\begin{equation}
    \min_{x\in X} \rho^u[\xi +mx],
    \label{eq:action without insurance}
\end{equation}
which the user faces when he does not purchase the insurance. 
The IR constraint in (\ref{eq:hidden action contract problem}) guarantees that purchasing the insurance is  beneficial to the user.
The IC constraint in (\ref{eq:hidden action contract problem}) takes into account the user's rationality.
Note that the contract problem (\ref{eq:hidden action contract problem}) is called the hidden-action P-A problem.
The term "hidden-action" refers to the fact that the insurer is unable to observe the true action taken by the user after he purchases the insurance.
Therefore, the IC constraint appears as the insurer's ex ante anticipation of the user's action.
The hidden-action P-A problem is free of the moral hazard issue \cite{stole2001lectures}.

The optimal coverage rate and the optimal premium obtained as the solution to problem (\ref{eq:hidden action contract problem}) are denoted $c^*$ and $q^*$.

\section{The Role of Risk Preferences}
\label{sec:analysis}
In this section, we investigate the role of the risk preferences of the insurer and the user in contract design.
We assume that the risk measures $\rho^u$ and $\rho^i$ are coherent and law-invariant.

\subsection{Problem Simplification}
\label{sec:analysis:simplification}
Since $\rho^u$ and $\rho^i$ satisfy axioms (A3) and (A4), (\ref{eq:hidden action contract problem}) becomes:
\begin{equation}
    \begin{aligned}
         \min_{c\in[0,1],q>0,x\in X} \  \ &c\rho^i[\xi]-q \\
         \text{s.t.} \  \ &(1-c)\rho^u[\xi]+mx+q\leq \Bar{U},\  \ \text{(IR)}, \\
         & x\in \argmin_{x'\in X} (1-c)\rho^u[\xi]+mx+q, \  \ \text{(IC)}.
         \label{eq:hidden action contract problem 1}
    \end{aligned}
\end{equation}
The risks $\rho^u[\xi]$ and $\rho^i[\xi]$ are functions of the user's investment $x$, for the random cyber loss $\xi$ follows the distribution $P(\xi,x)$.

The IR constraint in (\ref{eq:hidden action contract problem 1}) is binding. 
The reason lies in that the insurer can decrease her objective value by increasing the premium $q$ without affecting the IC constraint.
Hence, we obtain the following equality:
\begin{equation}
    (1-c)\rho^u[\xi]+mx+q=\Bar{U}.
    \label{eq: IR is binding}
\end{equation}
Using (\ref{eq: IR is binding}), problem (\ref{eq:hidden action contract problem 1}) can then be further simplified as follows:
\begin{equation}
    \begin{aligned}
         \min_{c\in[0,1],x\in X} \  \ &c\rho^i[\xi]+(1-c)\rho^u[\xi]+mx \\
         \text{s.t.} \  \
         & x\in \argmin_{x'\in X} (1-c)\rho^u[\xi]+mx, \  \ \text{(IC)}.
         \label{eq:hidden action contract problem 2}
    \end{aligned}
\end{equation}
Assuming that the solution to IC is in the interior of $X$, we obtain the first-order optimality condition of IC in (\ref{eq:hidden action contract problem 2}):
\begin{equation}
    0=(1-c)\frac{\partial}{\partial x} \rho^u[\xi]+m,
    \label{eq:foc of IC}
\end{equation}
where the term $\frac{\partial}{\partial x}
\rho^u[\xi]$ is the sensitivity of the risk of the user with respect to his protection investment $x$.
Note that the (sub)differentiability of a risk measure at a point is guaranteed when it is finite and continuous at that point and conditions (A1) and (A2) are satisfied \cite{shapiro2021lectures}.
Since CRMs admit the optimization formulation (\ref{eq:dual representation of individual risk}), Danskin's theorem indicates that the subdifferential $\partial \rho^u[\xi]$ takes the following form:
\begin{equation}
    \partial \rho^u[\xi]=\text{cov}\left\{\cup_{\zeta  \in \mathfrak{A}^*_u} \int_{\Xi}\partial(\xi p(\xi,x))\zeta(\xi)\dd P(\xi)\right\},
    \label{eq:subdifferential of rho}
\end{equation}
where $\text{cov}\{\cdot\}$ denotes the convex hull and $\mathfrak{A}^*_u$ denotes the set of optimizers of the dual representation of $\rho^u[\xi]$.
We refer the readers to \cite{shapiro2021lectures} for more details regarding the differentiability of risk measures.
The formula (\ref{eq:subdifferential of rho}) will be useful in Section \ref{sec:analysis:security} when we study the relation between risk perceptions and system security.
Combining (\ref{eq:foc of IC}) and (\ref{eq:hidden action contract problem 2}), we obtain the following reformulation of the contract design problem:
\begin{equation}
    \min_{c\in [0,1], x\in X} c\rho^i[\xi]+(1-c)\rho^u[\xi]+x\left(  (c-1)\frac{\partial}{\partial x}\rho^u[\xi] \right) -\Bar{U}.
    \label{eq:hidden action contract problem 3}
\end{equation}

\subsection{Properties of Feasible Contracts}
\label{sec:analysis:optimal contract}
From the discussions in Section \ref{sec:analysis:simplification}, we observe that a feasible contract satisfies the IR and the IC constraints.
Combining (\ref{eq:foc of IC}) and (\ref{eq: IR is binding}), we find that a feasible coverage plan $c$ and a feasible premium $q$ satisfies the following equations:
\begin{equation}
    c=1+\frac{m}{\frac{\partial}{\partial x}\rho^u[\xi]\big|_{x}},
    \label{eq:c}
\end{equation}
and 
\begin{equation}
    q=\Bar{U}-mx-c\rho^u[\xi] \big|_{x},
    \label{eq:q}
\end{equation}
where $x\in X$.
From (\ref{eq:c}), we observe that the feasible coverage rate is a decreasing function of $\frac{\partial}{\partial x}\rho^u[\xi]$.
The user's action completely determines the coverage.
The feasible premium in (\ref{eq:q}) also directly depends on the the user's risk.

We invoke the following result from \cite{ruszczynski2006optimization}.
\begin{lemma}
\label{lemma:consistency}
If a risk measure $\rho:\mathcal{Z}\rightarrow \mathbb{R}$ is law-invariant and satisfies condition (A1), then it is consistent with first-order stochastic dominance, \ie, $\rho[Z_1]\geq \rho[Z_2]$ if and only if $Z_1$ stochastically dominates $Z_2$ in the first-order.
\end{lemma}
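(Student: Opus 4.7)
The plan is to prove the result via a standard coupling argument that combines law-invariance with the monotonicity axiom (A1). First I would recall that, for the loss random variables considered here, first-order stochastic dominance of $Z_1$ over $Z_2$ means $F_{Z_1}(t)\leq F_{Z_2}(t)$ for every $t\in\mathbb{R}$, which is equivalent to the pointwise domination of quantile functions $F_{Z_1}^{-1}(u)\geq F_{Z_2}^{-1}(u)$ for every $u\in(0,1)$. This reformulation is what makes coupling possible.

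Next, I would assume (as is standard in the law-invariance theory of risk measures) that the reference probability space $(\Xi,\mathcal{F},P)$ is atomless, so that it supports a uniformly distributed random variable $U:\Xi\to[0,1]$. Define the coupled random variables $\tilde{Z}_i(\xi):=F_{Z_i}^{-1}(U(\xi))$ for $i\in\{1,2\}$. By the probability integral transform, $\tilde{Z}_i$ has the same cumulative distribution function as $Z_i$. Since $Z_1$ first-order stochastically dominates $Z_2$, the quantile inequality yields $\tilde{Z}_1(\xi)\geq \tilde{Z}_2(\xi)$ for almost every $\xi\in\Xi$.

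The conclusion now follows by chaining the two axioms. Law-invariance gives $\rho(Z_i)=\rho(\tilde{Z}_i)$ for $i=1,2$, and monotonicity (A1) applied to the pointwise inequality gives $\rho(\tilde{Z}_1)\geq\rho(\tilde{Z}_2)$. Combining these two relations delivers $\rho(Z_1)\geq\rho(Z_2)$, which is the forward direction of the iff.

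The main obstacle is two-fold. First, the atomlessness of $(\Xi,\mathcal{F},P)$ is required to construct the coupling on the same probability space on which $\rho$ is defined; this assumption is implicit in the law-invariance framework but is not stated explicitly in the excerpt, so I would record it as a standing hypothesis. Second, the reverse implication as literally written is not true for an arbitrary law-invariant monotone $\rho$ (take $\rho=\mathbb{E}[\cdot]$ and two random variables with equal means but incomparable distributions), so I would interpret the statement in the sense of \emph{consistency} with first-order stochastic dominance, i.e., the forward implication above, which is the content needed in Section~\ref{sec:analysis:security}; alternatively, the converse can be recovered only after restricting attention to comparable pairs or by quantifying over a sufficiently rich class of random variables, which I would flag rather than attempt to prove as stated.
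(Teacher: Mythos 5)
Your argument is correct, but note that the paper does not actually prove Lemma~\ref{lemma:consistency}: it is invoked as a known result from the cited reference (Ruszczy\'nski and Shapiro, 2006), so there is no in-paper proof to match. Your coupling construction $\tilde{Z}_i=F_{Z_i}^{-1}(U)$ on an atomless space, followed by law-invariance and (A1), is exactly the standard textbook proof of the forward implication, and the atomlessness of $(\Xi,\mathcal{F},P)$ is indeed a hypothesis that must be recorded (it is implicit in the paper's law-invariance framework but never stated). You are also right to push back on the ``if and only if'': the converse is false for a general law-invariant monotone $\rho$ (the expectation already gives a counterexample with two non-comparable laws of equal mean), and the result in the cited source is one-directional. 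This is harmless for the paper, since Section~\ref{sec:analysis:security} only uses the forward direction --- the stochastic dominance shift in $x$ together with consistency yields $\frac{\partial}{\partial x}\rho^u[\xi]\leq 0$ and $\frac{\partial}{\partial x}\rho^i[\xi]\leq 0$ --- but the lemma as stated should be weakened to the implication ``$Z_1$ dominates $Z_2$ in the first order $\Rightarrow$ $\rho[Z_1]\geq\rho[Z_2]$,'' which is precisely what your proof establishes.
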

Since we have assumed that when $x_2\geq x_1$, $P(\xi, x_1)$ stochastically dominates $P(\xi, x_2)$, the consistency result of Lemma \ref{lemma:consistency} indicates that $\frac{\partial}{\partial x}\rho^u[\xi]\leq 0$ and $\frac{\partial}{\partial x}\rho^i[\xi]\leq 0$.

Applying \ref{lemma:consistency} on (\ref{eq:c}), we arrive at a direct consequence that feasible coverage rates is increasing in the user's actions.
When utilized in the inverse direction, this monotonicity property suggests that the insurer can potentially incentivize the user to invest more on protection by increasing the coverage rate of the cyber insurance.
The monotonicity of feasible premiums with respect to the user's action is challenging to conclude solely from (\ref{eq:q}).
The reason lies in that the second and third terms on the right-hand side of (\ref{eq:q}) evolve to the opposite directions when the user's action changes. 

When $\rho^u[\xi]$ and $\frac{\partial}{\partial x}\rho^u[\xi]$ are evaluated at points in $X$ which optimize the objective function of (\ref{eq:hidden action contract problem 2}), (\ref{eq:c}) and (\ref{eq:q}) arrive at the optimal coverage rate and premium.
Hence, the values of the optimal coverage and the optimal premium can be calibrated conveniently when the set $\mathfrak{A}^*_u$ is a singleton.
If one is aware of how user's risk perception changes under exogenous influences, such as marketing, nudging, and information campaigns, then one can achieve better performances in the optimal contract using the knowledge.
Since the effect of information on human risk perception is beyond the scope of the current paper, we leave this topic as a future work.

\subsection{The Influence of Risk Perceptions on System Security}
\label{sec:analysis:security}

Cyber insurance, as a tool to increase the resiliency of cyber systems, is the last resort to enhance the system security.
The overall system security largely depends on the how much the user invests on protection.
In this part, we investigate the risk preferences of the insurer and the user to provide conditions under which the contract incentivizes the user to invest more on protection.
Our analysis is based on comparing the protection investment actions of the user when he purchases and does not purchase the insurance.
The former action is the optimal solution to (\ref{eq:hidden action contract problem 3}), which we denote as $x^*$.
The latter action is the optimal solution to (\ref{eq:action without insurance}), which we denote as $x^0$.

Next, we present our result which characterizes the conditions on the risk preferences for the contract to be incentivizing.

\begin{theorem}
\label{thm:security}
Insurance enhances system security by incentivizing the user to invest more on protection, \ie, $x^*\geq x^0$, if the following conditions hold:\\
(C1) The insurer is more risk-averse than the user: $\rho^i[\xi]\geq \rho^u[\xi]$, for all $x\in X$.\\
(C2) The insurer's risk perception is more sensitive than the user's to the stochastic dominance shift induced by the user's action, \ie, $|\frac{\partial}{\partial x}\rho^i[\xi] |\geq |\frac{\partial}{\partial x}\rho^u[\xi]|$.\\
\end{theorem}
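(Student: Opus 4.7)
The plan is to reduce the bilevel contract problem to a one-dimensional minimization in $c$ by using the IC condition (\ref{eq:foc of IC}) to solve for the user's action $x(c)$, then show that under (C1) and (C2) the resulting objective is non-decreasing in $c$, so that the optimum is attained at $c^*=0$, where the IC condition collapses to the user's standalone FOC and gives $x^*=x^0$.

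First, I would solve problem (\ref{eq:action without insurance}) by its FOC to obtain $\frac{\partial}{\partial x}\rho^u[\xi]\big|_{x^0}=-m$, with uniqueness and the second-order condition ensured by the convexity of $\rho^u[\xi]$ in $x$. This convexity is a consequence of the assumed convexity of $p(\xi,x)$ in $x$ carried through the dual representation (\ref{eq:dual representation of individual risk}): since $\xi\zeta(\xi)\geq 0$ and $p(\xi,x)$ is convex in $x$, each integrand is convex in $x$, and the supremum over $\zeta\in\mathfrak{A}$ preserves convexity. Applying Lemma \ref{lemma:consistency} to both $\rho^u$ and $\rho^i$ under the FSD assumption then yields $\frac{\partial}{\partial x}\rho^u[\xi]\leq 0$ and $\frac{\partial}{\partial x}\rho^i[\xi]\leq 0$, while convexity makes $\frac{\partial}{\partial x}\rho^u[\xi]$ non-decreasing in $x$.

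Second, I would parameterize $x$ by $c$ using IC: $(1-c)\frac{\partial}{\partial x}\rho^u[\xi]\big|_{x(c)}=-m$. Implicit differentiation combined with $\frac{\partial^2}{\partial x^2}\rho^u[\xi]\geq 0$ yields $\frac{dx}{dc}\leq 0$. Define the reduced one-dimensional objective $F(c)=c\rho^i[\xi]\big|_{x(c)}+(1-c)\rho^u[\xi]\big|_{x(c)}+mx(c)-\bar{U}$ obtained from (\ref{eq:hidden action contract problem 2}) after binding IR. Differentiating and collapsing the chain-rule bracket by IC leaves $F'(c)=\rho^i[\xi]\big|_{x(c)}-\rho^u[\xi]\big|_{x(c)}+c\frac{dx}{dc}\frac{\partial}{\partial x}\rho^i[\xi]\big|_{x(c)}$.

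Third, I would invoke (C1) to sign the first term $\rho^i[\xi]-\rho^u[\xi]\geq 0$, and observe that $c\geq 0$, $\frac{dx}{dc}\leq 0$, and $\frac{\partial}{\partial x}\rho^i[\xi]\leq 0$ together make the second term non-negative; (C2), the magnitude relation $|\frac{\partial}{\partial x}\rho^i|\geq|\frac{\partial}{\partial x}\rho^u|$, prevents the insurer's sensitivity from collapsing before the user's and thus guarantees the second term genuinely contributes whenever $c>0$. Hence $F'(c)\geq 0$ on $[0,1]$, so the minimizer is $c^*=0$, at which IC reduces to $\frac{\partial}{\partial x}\rho^u[\xi]\big|_{x^*}=-m=\frac{\partial}{\partial x}\rho^u[\xi]\big|_{x^0}$, and the monotonicity of $\frac{\partial}{\partial x}\rho^u[\xi]$ in $x$ delivers $x^*=x^0$, which satisfies $x^*\geq x^0$. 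The hard part will be handling the non-smooth case in which the dual optimizer sets $\mathfrak{A}^*_u$ and its insurer counterpart in (\ref{eq:subdifferential of rho}) are not singletons; in that case a measurable selection from these sets together with Danskin's theorem is needed to justify the interchange of differentiation and supremum throughout the chain of FOC manipulations.
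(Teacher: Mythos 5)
Your route is genuinely different from the paper's, and it exposes a problem rather than proving the intended statement. The paper never reduces to a one-dimensional problem in $c$. Instead it keeps $x$ as the variable: it substitutes the IC condition (\ref{eq:foc of IC}) into (\ref{eq:hidden action contract problem 2}) to get the unconstrained problem (\ref{eq:equivalent optimization}), rewrites the objective as (\ref{eq:equivalent user loss}) using $D(x)=\rho^i[\xi]-\rho^u[\xi]$, and observes that the stationarity condition (\ref{eq:foc of x*}) for $x^*$ equals the stationarity condition (\ref{eq:foc of x0}) for $x^0$ plus the two extra terms in (\ref{eq:condition for x*>x0}). It then signs those terms: the $D(x)$-term is nonpositive by (C1) together with $\frac{\partial^2}{\partial x^2}\rho^u[\xi]\geq 0$ (which follows from the convexity of $p(\xi,x)$ via (\ref{eq:second order derivative})), and the $\frac{\dd}{\dd x}D(x)$-term is nonpositive by (C2) and Lemma \ref{lemma:consistency}. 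Hence $\frac{\partial}{\partial x}\rho^u[\xi]\big|_{x^*}+m\geq 0=\frac{\partial}{\partial x}\rho^u[\xi]\big|_{x^0}+m$, and the monotonicity of $\frac{\partial}{\partial x}\rho^u[\xi]$ in $x$ gives $x^*\geq x^0$. Note that both (C1) and (C2) do real work in that argument, whereas your proof never actually uses (C2): your sentence about (C2) ``preventing the insurer's sensitivity from collapsing'' carries no mathematical weight, since the sign of your second term already follows from $c\geq 0$, $\frac{dx}{dc}\leq 0$, and $\frac{\partial}{\partial x}\rho^i[\xi]\leq 0$ alone. A hypothesis that a proof does not use is a warning sign.

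The deeper issue is what your reduction concludes. Your computation of $\frac{dx}{dc}\leq 0$ along the IC curve is correct given $\frac{\partial^2}{\partial x^2}\rho^u[\xi]\geq 0$, and it forces $x(c)\leq x(0)=x^0$ for every feasible $c\in[0,1]$; your monotonicity of $F$ then yields $c^*=0$ and $x^*=x^0$. That technically satisfies $x^*\geq x^0$, but only as an equality with a trivial contract, which empties the theorem of content and directly contradicts the paper's second remark, where the authors argue that a non-trivial coverage plan can exist under (C1) (their remark asserts that $x$ decreases when $c$ decreases along IC, i.e., the opposite sign of your $\frac{dx}{dc}$). So either your implicit-differentiation step or the paper's intended reading of the IC coupling must give way; you have not proved the theorem in the sense the paper means it, you have shown that under your reading of the model the insurer optimally offers no coverage. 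A correct submission would either follow the paper's FOC-comparison argument (which presupposes an interior stationary point $x^*$ of (\ref{eq:equivalent optimization}) and never confronts the boundary $c=0$), or explicitly flag that the reduction to the $c$-variable makes the statement degenerate.
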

\begin{proof}
We first explicitly express the first-order optimality conditions that $x^0$ and $x^*$ satisfy.
From (\ref{eq:action without insurance}), we observe that $x^0$ satisfies the following equation:
\begin{equation}
    0=\frac{\partial}{\partial x}\rho^u[\xi]+m.
    \label{eq:foc of x0}
\end{equation}
To obtain the formula that $x^*$ has to satisfy, we substitute (\ref{eq:foc of IC}) into (\ref{eq:hidden action contract problem 2}) to eliminate the decision variable $c$:
\begin{equation}
    \min_{x\in X} \  \ U'(x):=(1+\frac{m}{\frac{\partial}{\partial x}\rho^u[\xi]})\rho^i[\xi]-\frac{m}{\frac{\partial}{\partial x}\rho^u[\xi]}\rho^u[\xi]+mx.
    \label{eq:equivalent optimization}
\end{equation}
The optimal solution to (\ref{eq:equivalent optimization}) is equivalent to $x^*$.
By adding the term $\rho^u[\xi]-\rho^u[\xi]$ to $U'(x)$ and performing a few algebraic manipulation, we arrive at the following:
\begin{equation}
    U'(x)=\left(\rho^i[\xi]-\rho^u[\xi] \right)\cdot \left(1+\frac{m}{\frac{\partial}{\partial x}\rho^u[\xi]} \right)+\rho^u[\xi] + mx.
    \label{eq:equivalent user loss}
\end{equation}
Let $D(x)=\rho^i[\xi]-\rho^u[\xi]$.
From (\ref{eq:equivalent user loss}), we obtain the following first-order optimality condition which $x^*$ satisfies:
\begin{equation}
    \begin{aligned}
    0=&\frac{\dd}{\dd x}D(x)\cdot \left(1+\frac{m}{\frac{\partial}{\partial x}\rho^u[\xi]} \right)+D(x)\cdot \frac{-m\frac{\partial^2}{\partial x^2}\rho^u[\xi]}{(\frac{\partial}{\partial x}\rho^u[\xi])^2} \\
    &+\frac{\partial }{\partial x}\rho^u[\xi]+m.
    \label{eq:foc of x*}
    \end{aligned}
\end{equation}
Next, we investigate the relation between (\ref{eq:foc of x0}) and (\ref{eq:foc of x*}).
Fix a subgradient in (\ref{eq:subdifferential of rho}), we can obtain the following:
\begin{equation}
    \frac{\partial^2}{\partial x^2}\rho^u[\xi]=\int_{\Xi}\xi\zeta(\xi)\frac{\partial^2}{\partial x^2}p(\xi,x)\dd P(\xi),
    \label{eq:second order derivative}
\end{equation}
where $\zeta(\xi)\in \mathfrak{A}^*_u$ is a density function with respect to $P(\xi)$.
Since $\Xi\in\mathbb{R}_+$, $\zeta$ is a probability density function and $P$ is a probability measure in
(\ref{eq:second order derivative}), the convexity assumption $\frac{\partial^2}{\partial x^2}p(\xi,x)\geq 0$ leads to $\frac{\partial^2}{\partial x^2}\rho^u[\xi]\geq 0$ for all $x\in X$.
Since (\ref{eq:foc of x0}) contains $\frac{\partial}{\partial x}\rho^u[\xi]$ evaluated at $x^0$ and (\ref{eq:foc of x*}) contains $\frac{\partial}{\partial x}\rho^u[\xi]$ evaluated at $x^*$, $\frac{\partial^2}{\partial x^2}\rho^u[\xi]\geq 0$ tells that to prove $x^*\geq x^0$, it suffices to prove that the following condition holds at $x^*$:
\begin{equation}
    \frac{\dd}{\dd x}D(x)\cdot \left(1+\frac{m}{\frac{\partial}{\partial x}\rho^u[\xi]} \right)+D(x)\cdot \frac{-m\frac{\partial^2}{\partial x^2}\rho^u[\xi]}{(\frac{\partial}{\partial x}\rho^u[\xi])^2}\leq 0.
    \label{eq:condition for x*>x0}
\end{equation}
The second term of the left-hand side of (\ref{eq:condition for x*>x0}) is negative if (C1) holds.
Since $\frac{\partial}{\partial x}\rho^i[\xi]\leq 0 $ and $\frac{\partial}{\partial x}\rho^u[\xi]\leq 0 $ because of Lemma \ref{lemma:consistency}, the first term of the left-hand side of (\ref{eq:condition for x*>x0}) is negative if condition (C2) holds.
This completes the proof.
\end{proof}

Theorem \ref{thm:security} indicates that cyber insurance incentivizes the user's protection investment when conditions (C1) and (C2) hold.
Condition (C1) means that the insurer is more risk-averse than the user.
In other words, the insurer measures the risk to be severer than the user does when they face the same random cyber loss.
This condition suggests that the insurer is more sophisticated in estimating the harm from potential cyber attacks or system failures.  
Hence, the insurer can leverage the more detailed knowledge of how the cyber loss influence the system to design the insurance contract for protection purposes.
Condition (C2) indicates that the insurer's risk perception is more sensitive to the changes of the action of the user than the user himself.
This requirement can have the interpretation that the insurer is more aware of how the probabilistic changes of the cyber losses induced by the user's actions influences the systemic risks.
This privilege of the insurer aides her in designing the incentive contracts so that the user's actions after purchasing the insurance stochastically shift the cyber losses to benefit the overall system security.

\begin{remark}
Risk-aversion of the user is an essential property to allow the user to seek cyber insurance.
For example, the recent work \cite{khalili2018designing} has investigated a linear cyber insurance problem where the insurer is risk neutral.
The authors in \cite{khalili2018designing} have shown that the cyber insurance market only exists when the user is risk-averse.
While condition (C1) in Theorem \ref{thm:security} seems to stand on the opposite side of the result of \cite{khalili2018designing}, we can show that they are consistent.
The reason lies in the difference between minimizing the loss and maximizing the utility. The axioms of risk measures in Definition \ref{def:CRM} only fit minimization problem. 
For maximization problem, condition (A1) becomes condition (A1'): $\rho(Z)\leq \rho(Z')$ if $Z(\xi)\geq Z'(\xi)$ for almost everywhere $\xi\in \Xi$.
When we consider our problem in a maximization setting with risk measures $\rho^{i'}$ and $\rho^{u'}$ satisfying (A1'), the relations between stochastic dominance and the risk measures $\rho^{i'}$ and $\rho^{u'}$ are reversed. 
Then, in the maximization counterpart of our problem, condition (C1) becomes the relation $\rho^{i'}[-\xi]-\rho^{u'}[-\xi]\leq 0$, which means that, observing the same gain $-\xi$, the user should be the one who exhibits more risk-aversion.
Therefore, our setting is consistent with the findings in the literature \cite{khalili2018designing,marotta2017cyber,bohme2010modeling}
\end{remark}

\begin{remark}
When condition (C1) in Theorem \ref{thm:security} holds, 
a decrease in the coverage rate $c$ will lead to a decrease in the objective value of (\ref{eq:hidden action contract problem 2}).
However, this fact does not necessarily leads to a trivial coverage plan, \ie, $c^*=0$, due to the existence of the IC constraint in (\ref{eq:hidden action contract problem 2}).
Suppose that the insurer deceases the coverage plan $c$. 
Since the action of the user $x$ and the coverage plan $c$ should satisfy (\ref{eq:foc of IC}) and the risk $\rho^u$ is decreasing in $x$, the action $x$ of the user will also decrease.
Then, the deceased action $x$ can increase the objective value of (\ref{eq:hidden action contract problem 2}) since both $\rho^i$ and $\rho^u$ increases.
Thus, a non-trivial coverage plan can exist when condition (C1) holds.
\end{remark}

\subsection{Perception Compromise}
\label{sec:analysis:compromise}
In this section, we discuss the interpretations of problem (\ref{eq:hidden action contract problem 2}) by recognizing equivalences in the insurer's choices.

One observes that choosing the coverage rate $c\in[0,1]$ in the objective function of (\ref{eq:hidden action contract problem 2}) can be identified as choosing a distribution with support $\{\rho^i[\xi],\rho^u[\xi]\}$.
This viewpoint can have the interpretation that the insurer balances between her own risk perception $\rho^i$ and the user's risk perception $\rho^u$.  
In other words, in order to design an optimal cyber insurance contract, the insurer compromises her own risk perception to the user's risk perception and tries to find an averaged perception in between them as the calibration of her risk.

The following explanation can also lead to the above observations.
Recall that any convex combination of CRMs is still a CRM \cite{acerbi2002spectral}.
Then, problem (\ref{eq:hidden action contract problem 2}) can be reformulated as:
\begin{equation}
    \begin{aligned}
      \min_{x\in X,\Tilde{\rho}\in \Sigma} \  \ &\Tilde{\rho}[\xi]+mx \\
         \text{s.t.} \  \
         & x\in \argmin_{x'\in X} (1-c)\rho^u[\xi]+mx, \  \ \text{(IC)},
         \label{eq:equivalent risk measure}
    \end{aligned}
\end{equation}
where $\Sigma:=\{\rho:\mathcal{Z}\rightarrow\mathbb{R}|\rho=c\rho^i+(1-c)\rho^u\}$ denotes the set of compromised risk measures.
Problem (\ref{eq:equivalent risk measure}) indicates that the insurer's contract design problem is equivalent to a risk preference design problem.
The set $\Sigma$ means that the feasible choices of risk preferences is a compromise of the insurer's original preference $\rho^i$ to the user's preference $\rho^u$.

\section{Case Study: Ransomware on Computer Systems}
\label{sec:case}
We consider a scenario where a user aims to insure his networked computer system containing $n$ computers against a ransomware.
The ransomware is developed by an attacker, who can hack into the computers to lock down their performances.
Once a computer is locked down by the ransomware, the only way to restart it is to pay a certain amount of ransom to the attacker to let him unlock the computer.
The protection investment $x\in[0,1]$ in this case refers to the level of effort the user spends to update the firewalls of the computer system to defend against the ransomware.
An increase in the protection investment decreases the probability that a computer is infected by the ransomware.
We normalize the ransom to be paid per computer to an amount so that the loss to the user can be denoted by the number of computers that get locked down, \ie, $\Xi=\{1,2,...,n\}$ and the distribution of disabled computers follow a binomial distribution with $n$ trials and success probability $1-0.8x^2$. Note that this setting satisfies the assumption that an increase in the investment induces a stochastic dominance shift in the cyber loss. The factor $0.8$ represents that fact that the risk cannot be fully avoided.

\begin{figure}[ht]
\centering
\vspace{-3mm}
\subfigure[Change of feasible coverage rate when user becomes more risk-averse.]{\includegraphics[width=0.22\textwidth]{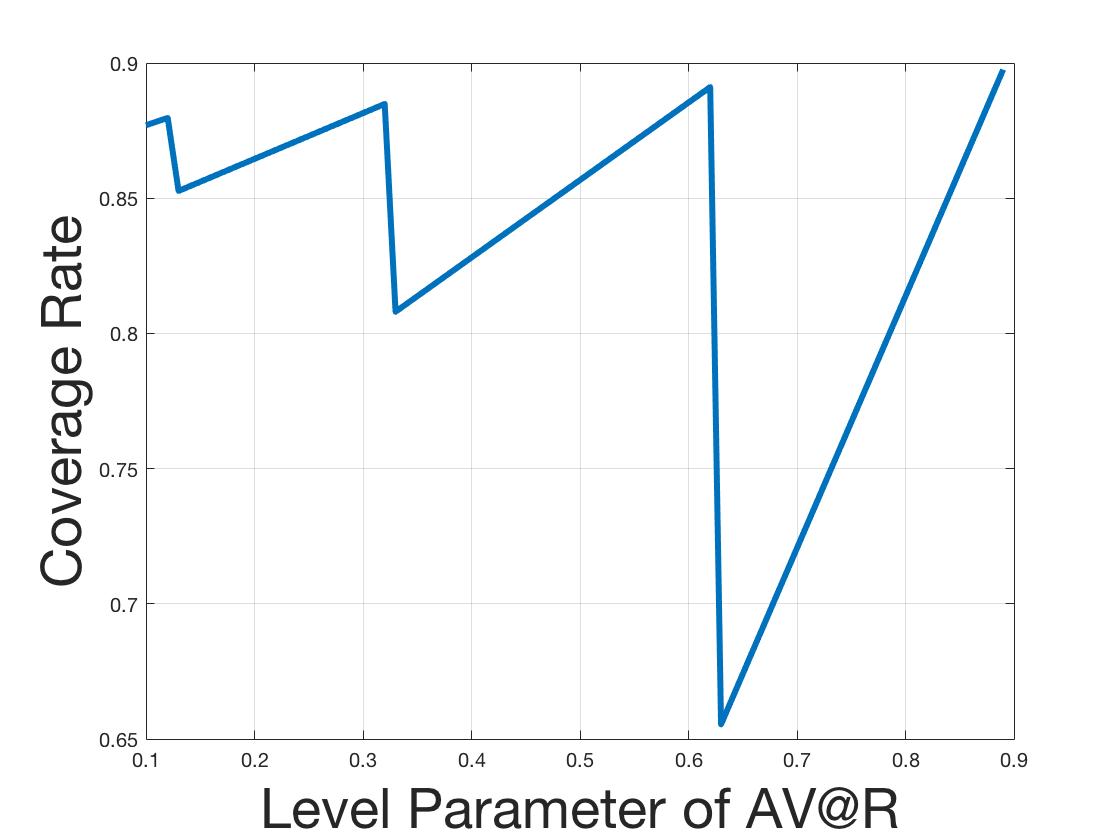}
    \label{fig:subfig1}
}
\ \
\subfigure[Change of feasible premium when user increases investment.]{\includegraphics[width=0.22\textwidth]{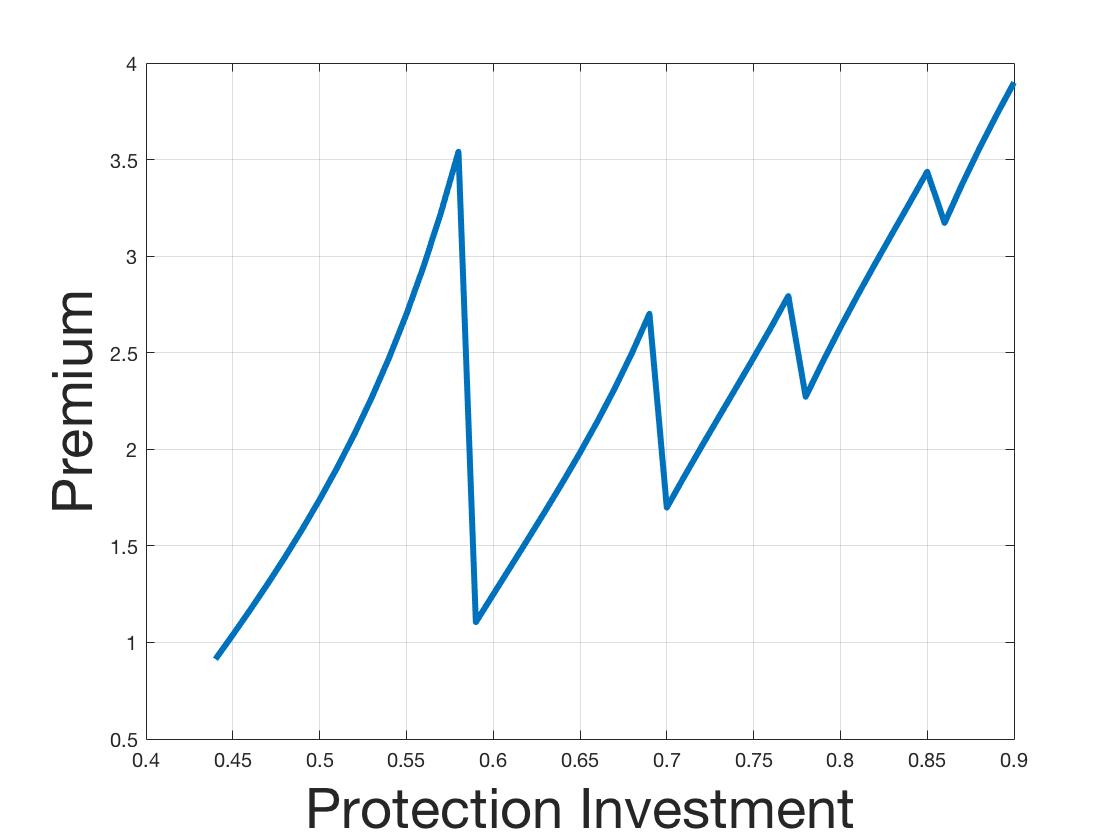}
    \label{fig:subfig2}
}
\vspace{-2mm}
\caption[Optional caption for list of figures]{Properties of feasible contracts.} 
\label{fig:1}
\end{figure}

In numerical results on feasible contracts, we use AV@R to capture the risk perception of the user. 
In Fig. \ref{fig:subfig1}, we discover that the feasible coverage rate is piecewise increasing in the level parameter of the user's AV@R increases.
In Fig. \ref{fig:subfig2}, we show the feasible premium as a function of the user's protection investment. The plot in Fig. \ref{fig:subfig2} also appears to be a piecewise increasing function.
The piecewise increasing property is related to the dual representation of a risk measure as shown in \ref{eq:dual representation of individual risk}.
According to \cite{liu2022mitigating}, since the probability density function which admits the optimal value of \ref{eq:dual representation of individual risk} may lack continuity, the monotonicity property of the contract can only be guaranteed in the region where the density is continuous.
The piecewise increasing property can create challenges for the insurer in design the optimal contract. 
Nevertheless, the contract is increasing locally.
Hence, the insurance can be design more conveniently when the risk perceptions and the actions of the user are in restricted regions.
Such scenarios occur, for example, when the user's firewall updates have limited effects on the security of the computer system, or when the risk perception of the user is resistant to exogenous information.

We can also consider a multi-users scenario where each user owns a subset of the computers and possesses a unique risk perception. 
The risk preference type setting considered in \cite{liu2022mitigating} is a feasible formulation for this scenario.
Accordingly, the insurer designs the contract by considering the averaged response from the population of the users.
We will leave this extension to the future work.

\section{Conclusion}
\label{sec:conclusion}
In this paper, we have formulated a cyber insurance contract design problem which incorporates coherent risk measures to capture the risk perceptions of the insurer and the user.
Using the proposed framework, we have characterized the feasible coverage rate and premium in terms of the risk of the user and its derivative with respect to the user's action.
Furthermore, we have proved that a cyber insurance can incentivize the user to invest more on the system protection when the insurer is both more risk-averse than the user and more sensitive to the impacts of the user's action changes on the risks than the user.

The nonlinearity induced by the risk measures makes the investigation of the optimal contract challenging. 
We leave the study of the properties of the optimal contracts in the future work.
Another possible extension would be the study of insurance design in a dynamic environment. This extension would encounter two major challenges. The first challenge centers around solving a dynamic principal-agent problem. The second challenge arises since we need to adopt a dynamic assessment of the risk. Available approaches to dynamic risk assessment include using the dynamic risk measures or considering a repetition of static risk measures.

\bibliographystyle{IEEEtran}
\bibliography{bibliography.bib}
\nocite{*}

%






\end{document}